\pgfplotsset{compat=1.18}
\definecolor{auburnorange}{RGB}{232, 119, 34}
\definecolor{auburnblue}{RGB}{12, 35, 64}
\definecolor{lightgray}{gray}{0.9}
\definecolor{lightbluegray}{RGB}{220,230,240}
\newtheorem{remk}{Remark}
\newtheorem{prop}{Proposition}
\definecolor{nonprivcolor}{RGB}{0,114,189} 
\definecolor{anascolor}{RGB}{217,83,25} 
\definecolor{privquantcolor}{RGB}{119,172,48} 
\begin{document}

\title{Differentially Private Conformal Prediction\\ via Quantile Binary Search}

\author{\name Ogonnaya Michael Romanus \email omr0010@auburn.edu \\
       \addr Department of Mathematics \& Statistics\\
       Auburn University\\
       Auburn, AL 36849, USA
       \AND
       \name Roberto Molinari \email robmolinari@auburn.edu \\
       \addr Department of Mathematics \& Statistics\\
       Auburn University\\
       Auburn, AL 36849, USA}

\editor{My editor}

\maketitle

\begin{abstract}
Differentially Private (DP) approaches have been widely explored and implemented for a broad variety of tasks delivering corresponding privacy guarantees in these settings. While most of these DP approaches focus on limiting privacy leakage from learners based on the data that they are trained on, there are fewer approaches that consider leakage when procedures involve a calibration dataset which is common in uncertainty quantification methods such as Conformal Prediction (CP). Since there is a limited amount of approaches in this direction, in this work we deliver a general DP approach for CP that we call Private Conformity via Quantile Search (P-COQS). The proposed approach adapts an existing randomized binary search algorithm for computing DP quantiles in the calibration phase of CP thereby guaranteeing privacy of the consequent prediction sets. This however comes at a price of slightly under-covering with respect to the desired $(1 - \alpha)$-level when using finite-sample calibration sets (although broad empirical results show that the P-COQS generally targets the required level in the considered cases). Confirming properties of the adapted algorithm and quantifying the approximate coverage guarantees of the consequent CP, we conduct extensive experiments to examine the effects of privacy noise, sample size and significance level on the performance of our approach compared to existing alternatives. In addition, we empirically evaluate our approach on several benchmark datasets, including CIFAR-10, ImageNet and CoronaHack. Our results suggest that the proposed method is robust to privacy noise and performs favorably with respect to the current DP alternative in terms of \textit{empirical coverage}, \textit{efficiency}, and \textit{informativeness}. Specifically, the results indicate that P-COQS produces smaller conformal prediction sets while simultaneously targeting the desired coverage and privacy guarantees in all these experimental settings.
\end{abstract}

\begin{keywords}
  Privacy, Quantile Estimation, Computational Efficiency, Uncertainty Quantification
\end{keywords}

\section{Introduction}

In high-stakes domains such as health care, criminal justice, and financial risk assessment, machine learning models are increasingly being used to guide decision-making processes. However, traditional approaches often produce single-point predictions without quantifying uncertainty, which could lead to overconfident and unreliable outcomes. Many uncertainty quantification techniques have been proposed to address this need, including (among others) Bayesian neural networks, ensemble methods, and dropout-based approximations \citep[see e.g.][]{blundell2015weight, lakshminarayanan2017simple, osband2016deep, gal2016dropout, kendall2017uncertainties}. While these methods can offer meaningful uncertainty estimates, they often rely on strong modeling assumptions, are computationally intensive, or lack rigorous coverage guarantees (especially under model misspecification or when applied out-of-distribution). Conformal Prediction (CP) addresses various of these limitations by generating \emph{prediction sets}, that is, subsets of the response space $\mathcal{Y}$ that, under the assumption of exchangeability, are guaranteed to contain the true response with a user-specified probability $1 - \alpha$, where $\alpha \in (0,1)$. These sets provide a reliable and model-agnostic framework for uncertainty quantification, making them increasingly attractive in safety-critical applications \citep[see e.g.][]{uddin2023applications, astigarraga2025conformal, tuwani2023safe, olsson2022estimating}. While conformal methods require i.i.d. data for exact guarantees, recent extensions have begun to explore how to adapt them to distribution shift settings with approximate validity \citep[see e.g.][]{tibshirani2019conformal, gibbs2021adaptive, jonkers2024wcps}. However, while CP provides statistical reliability, its reliance on calibration data introduces privacy risks, as any publicly released statistic based on these data could be exploited by an adversary and used for malicious purposes. Moreover, when training models on data that collect individual-level information, these models tend to preserve individual-specific information, and some of them, such as deep neural networks, are known to \emph{memorize} training examples \citep{fredrikson2015model,carlini2021extracting}, enabling adversaries to extract sensitive information via model inversion \citep{fredrikson2015model} or membership inference attacks \citep{shokri2017membership}. This poses a significant challenge also in regulated domains (e.g., under GDPR, HIPAA, or CCPA), where models must not leak information about individuals in the training or calibration sets.

To address these issues, Differential Privacy (DP) \citep{dwork2006differential} has emerged as one of the main frameworks for privacy-preserving machine- and statistical-learning, ensuring that model outputs do not reveal whether any individual's data was included in the training set. However, existing DP frameworks focus primarily on protecting the \emph{training} phase, leaving the calibration phase (and thus the CP step) vulnerable to privacy breaches. For example, an attacker could exploit the calibration scores to infer whether a specific individual was included in the calibration data \citep{c1}. Given this problem, there are nevertheless very few existing approaches that ensure that CP procedures can provide DP guarantees while maintaining reasonable performance in terms of uncertainty quantification metrics and computational efficiency. Hence, this work aims to further contribute to the goal of uncertainty quantification under privacy constraints by adapting an existing DP quantile selection approach based on binary search \citep{c2} and using it as a subroutine to deliver CP sets that guarantee privacy. We confirm the theoretical properties of the resulting CP procedure and highlight its favorable experimental performance compared to the few existing alternatives, both in terms of uncertainty quantification and computational efficiency.

\section{Related Work}
\label{sec:rel_work}

The idea behind CP was first described in \citet{c3} followed by different papers that developed its theory and applications (see \citet{c6}, \citet{c9}, \citet{c10}, \citet{c11}, and the references therein). Based on these, considerable attention has been paid in recent years to adapt CP to different settings and deliver additional applied and theoretical results (see \citet{c39} for a recent detailed review). Despite these different research outputs, little attention has been devoted to guaranteeing privacy for the calibration data through protection of the derived \textit{non-conformity scores} used to deliver the prediction sets for uncertainty quantification. The only existing work related to the central DP approach we present in this paper is that of \citet{c1} where they privatize the CP procedure by discretizing the non-conformity scores into bins and applying the \textit{exponential mechanism} (\citet{c40}) to select a DP quantile which is then used to form the prediction sets. More specifically, to maintain coverage guarantees under $\epsilon$-DP, the target quantile is adjusted (inflated) to account for added noise. This ensures both privacy (via DP) and valid coverage, albeit with potentially larger prediction sets due to the privacy-utility trade-off. A possible practical limitation of this proposed approach lies in the choice of the optimal number of bins, as well as a hyperparameter that needs to be tuned to minimize the inflated quantile to achieve the smallest possible prediction set. Although procedures have been determined to compute these in an optimal manner, they can potentially increase the time complexity of this method. In recent years there have been few other CP approaches proposed in the setting of federated learning (see \cite{humbert2023one}) and \textit{local DP} \cite{penso2025privacy}, however, these do not fit in the central DP framework of this paper.

\section{Preliminaries}

\subsection{Conformal Prediction}
\label{sec:conf_pred}

To define the framework of the proposed method, let us consider $n$ exchangeable data points $\mathcal{D} = \{(X_1,Y_1), \dots, (X_n, Y_n)\}$ with $X_i \in \mathcal{X}^d$ being a $d$-dimensional input vector and $Y_i \in \mathcal{Y}$ being the target variable. Using the basic \emph{split} CP \citep{c4,c5}, we can partition these points into a training set with $n_{\text{train}}$ observations (to fit a model of choice $f$) and a calibration set with $n_{\text{cal}} = n - n_{\text{train}}$ observations to quantify prediction uncertainty (assuming test data is taken from another sample). If we let $\mathcal{I} = \{1, \hdots, n\}$ represent the index set of the observations, then we can define $\mathcal{I}_{\text{cal}} \subset \mathcal{I}$ as the index set of observations in the calibration set. Assuming that we have trained a model $\hat{f}$ on the training set, the goal of CP is to compute \textit{non-conformity scores} $s_i = s(X_i, Y_i, \hat{f})$ for each calibration point $(X_i, Y_i), \, \forall i \in \mathcal{I}_{\text{cal}}$. These scores measure the discrepancy between the model's prediction and the true label such that lower scores suggest better agreement, that is, better predictions from the trained model $\hat{f}$. Therefore we now have a set of non-conformity scores $\mathcal{S} =\{s_1, \hdots, s_{\text{cal}}\}$ on which we can obtain $\alpha$-quantiles with $\alpha \in (0,1)$ being the significance level which represents the amount of errors (in percentage) that the model is allowed to make. We denote this quantile as $q$ and let its dependence on the significance level $\alpha$ be implicit. Once the scores are computed, we can therefore sort them in increasing order (that is,  $\{s_{[1]} \leq \hdots \leq s_{[n_{\text{cal}}]}\}$) and estimate this quantile by selecting the score in the $r^{\text{th}}$ position where we define $r = \lceil (1-\alpha) (n_{\text{cal}}+1) \rceil$, with $\lceil v \rceil$ representing the smallest integer greater than $x$, leading to $\hat{q} = s_{[r]}$. Suppose that we now have a new test data point $X_{\text{test}}$,  the resulting prediction set for the target $Y_{\text{test}}$ is defined as:
\[
\mathcal{C}(X_{\text{test}}) = \left\{ Y \in \mathcal{Y} : s(X_{\text{test}}, Y, \hat{f}) \leq \hat{q} \right\}.
\] 
Under the assumption of exchangeability of the data points, this set guarantees the coverage probability $\mathbb{P}(Y_{\text{test}} \in \mathcal{C}(X_{\text{test}})) \geq 1-\alpha$, where the probability is taken over the $n+1$ data points $ (X_1,Y_1), \dots, (X_n, Y_n), (X_{\text{test}},Y_{\text{test}})$. For further details and extensions we refer the reader to \citet{c39}.

\subsection{Differential Privacy}
\label{sec:diff_priv}

Differential Privacy (DP) currently represents one of the main frameworks for privacy protection with mathematically provable guarantees. More in detail, a randomized mechanism $\mathcal{M}$ is said to be $(\epsilon, \delta)$-DP if, for all neighboring datasets $\mathcal{D}$ and $\mathcal{D'}$ (that is, datasets with Hamming distance $d_{ham} (\mathcal{D},\mathcal{D'})=1$) and for all measurable subsets $\mathcal{Q}$ of the output space, we have that
\[
\mathbb{P}[\mathcal{M}(\mathcal{D}) \in \mathcal{Q}] \leq e^{\epsilon} \mathbb{P}[\mathcal{M}(\mathcal{D}') \in \mathcal{Q}] +\delta.
\]
There are many variations of this framework, including Gaussian-DP which benefits from a series of nice additional properties compared to the above $(\epsilon, \delta)$-DP \citep[][]{ dwork2006differential, dong2022gaussian}. For the purpose of this work, we rely on a specific version of DP defined below.

\begin{definition}[Zero-Concentrated Differential Privacy,  \cite{bun2016}]
For $\rho >0$, a randomized mechanism $\mathcal{M}$ satisfies $(\rho, 0)$-Zero-Concentrated Differential Privacy (zCDP) if for all neighboring datasets $\mathcal{D}$ and $\mathcal{D}'$ and for all $\alpha > 1$, the Rényi divergence $D$ of order $\alpha$ satisfies:
\[
D_{\alpha}(\mathcal{M}(\mathcal{D}) \| \mathcal{M}(\mathcal{D}')) \leq \rho \alpha.
\]
\end{definition}

Unlike $(\epsilon, \delta)$-DP, $\rho$-zCDP is \textit{tight} under sequential composition without the need for the \textit{advanced composition theorem}, which may consume extra privacy budget. Nevertheless, there is a direct relationship between the $(\epsilon, \delta)$-DP framework and the zCDP framework: indeed, if $\mathcal{M}$ satisfies $\rho$-zCDP, then for $\delta>0$, $\mathcal{M}$ satisfies $(\epsilon, \delta)$-DP for $\epsilon=\rho +2\sqrt{\rho \log(1/\delta)}$ \citep[see][]{near2021programming}. Hence, the latter implies the former. In particular, a randomized mechanism $\mathcal{M}$ satisfies $(\epsilon,0)$-DP if and only if it satisfies $\epsilon$-zCDP \citep{bun2016}.

\section{P-COQS: Private Conformity via Quantile Search}


As highlighted in \cite{c1}, the release of the non-conformity score quantile or of the prediction sets based on this quantile can compromise the privacy of individuals in the calibration set. Therefore, a solution to this problem is to directly release a DP quantile or to release the corresponding prediction sets, which would also preserve DP based on the privatized quantile. While the existing alternative in \cite{c1} employs the exponential mechanism to sample a DP quantile, this work relies on a direct adaptation of the DP binary search algorithm presented in \cite{c2}. More specifically, the latter algorithm releases a DP quantile by employing a noisy count function (denoted as \texttt{NoisyRC}) within a binary search procedure run on a sequence of ordered integers. In particular, the function \texttt{NoisyRC}$([a,b], \mathcal{D})$ returns a noisy count of $\mathcal{D} \cap [a,b]$, where $\mathcal{D}$ is a set containing ordered integers that are considered sensitive and hence must be accessed privately \citep[see][]{c2}. However, since in this work our sequences are not integers (given that the non-conformity scores are continuous), we adapt this DP quantile selection approach in a straightforward manner and, for completeness, we reproduce their adapted approach in Algorithm \ref{alg:P-COQS} with some modification to suit our application. 

\begin{algorithm}
\caption{Adapted \texttt{PrivQuant} Algorithm of \cite{c2}}
\label{alg:P-COQS}
\begin{algorithmic}[1]
\Require Non-conformity scores $\mathcal{S} = \{s_{1}, \dots, s_{n_{\text{cal}}}\}$, significance level $\alpha \in (0,1)$, lower and upper bounds for non-conformity scores $[a,b]$, $\delta > 0$ (small positive value, default $\delta = 10^{-10}$), privacy parameter $\rho$
\Ensure DP quantile $q^{\text{DP}}$
\State Fix $r = \lceil (1-\alpha)(n_{\text{cal}}+1) \rceil$ and $N = \lceil \log_2\left(\nicefrac{b-a}{\delta}\right) \rceil$
\State Initialize: $\text{left} \gets a$, $\text{right} \gets b$, $i \gets 0$
\While{$i \leq N$}
    \State $\text{mid} \gets \frac{\text{left} + \text{right}}{2}$
    \State $c \gets \texttt{NoisyRC}([a, \text{mid}], \mathcal{S})$
    \If{$c < r$}
        \State $\text{left} \gets \text{mid} + \delta$
    \Else
        \State $\text{right} \gets \text{mid}$
    \EndIf
    \State $i \gets i + 1$
\EndWhile
\State \Return $q^{\text{DP}} = \frac{\text{left} + \text{right}}{2}$
\end{algorithmic}
\end{algorithm}

Broadly speaking, Algorithm \ref{alg:P-COQS} performs a binary search over the interval $[a, b]$ (often corresponding to $[0, 1]$ in the case of classification tasks) by iteratively narrowing the search space to localize the quantile of interest while preserving privacy through a randomized counting mechanism. More specifically, at each iteration, the algorithm evaluates a midpoint and calls the function \texttt{NoisyRC}, which returns a DP count of the number of non-conformity scores (in $\mathcal{S}$) less than or equal to the midpoint. This count guides the search by indicating whether the true quantile lies to the left or right of the midpoint. The process continues until it reaches iteration $N$ which is the maximum number of iterations allowed to ensure $\rho$-zCDP. Indeed, as stated further on, the DP noise used for the function \texttt{NoisyRC} needs to be scaled to the number of iterations in Algorithm \ref{alg:P-COQS}. Since we would ideally want the algorithm to terminate when the interval length is smaller than a small value $\delta$ (i.e. $\text{right} - \text{left} \leq \delta$), the fixed number of iterations $N$ needed to reach this point is such that $\nicefrac{b - a}{2^N} \leq \delta$ (since the interval is divided by two after each iteration) which, solving for $N$, gives us
$$N = \lceil \log_2\left(\nicefrac{b-a}{\delta}\right) \rceil.$$ 
With this representing the exact number of iterations to guarantee the required condition (i.e. $\text{right} - \text{left} \leq \delta$), Algorithm \ref{alg:P-COQS} returns the midpoint as the private quantile estimate $q^{\text{DP}}$. Once this quantile is obtained, following the definition in Section \ref{sec:conf_pred}, the DP conformal prediction set is simply defined as:
\begin{equation}
\label{eq:cops}
 \mathcal{C}^{DP}(X_{\text{test}}) = \left\{ Y \in \mathcal{Y} : s(X_{\text{test}}, Y, \hat{f}) \leq q^{DP} \right\}.
\end{equation}
We refer to the above procedure as Private Conformity via Quantile Search (P-COQS) which therefore is built on the adaptation described in Algorithm \ref{alg:P-COQS} and the consequent use of the resulting DP quantile $q^{DP}$ in Equation \eqref{eq:cops}. The entire P-COQS procedure is summarized in Algorithm \ref{alg:P-COQS_complete}.

\begin{algorithm}
\caption{P-COQS}
\label{alg:P-COQS_complete}
\begin{algorithmic}[1]
\Require Training data $\mathcal{D}_{\text{train}} = \{(X_1, Y_1), \dots, (X_{n_{\text{train}}}, Y_{n_{\text{train}}})\}$, calibration data $\mathcal{D}_{\text{cal}} = \{(X_1, Y_1), \dots, (X_{n_{\text{cal}}}, Y_{n_{\text{cal}}})\}$, test data point $X_{\text{test}}$, significance level $\alpha \in (0,1)$, lower and upper bounds for non-comformity scores $[a,b]$, $\delta > 0$ (small positive value, default $\delta = 10^{-10}$), privacy parameter $\rho$
\Ensure DP prediciton set $\mathcal{C}^{\text{DP}}$
\State Train model on training data $\mathcal{D}_{\text{train}}$ to obtain $\hat{f}$
\State Obtain non-conformity scores $\mathcal{S} = \{s_{1}, \dots, s_{n_{\text{cal}}}\}$ from calibration data $\mathcal{D}_{\text{cal}}$ using $\hat{f}$
\State Run Algorithm \ref{alg:P-COQS} to obtain $q^{DP}$
\State Compute $\mathcal{C}^{DP}$ based on Equation \eqref{eq:cops} using $q^{\text{DP}}$
\State \Return $\mathcal{C}^{DP}$
\end{algorithmic}
\end{algorithm}

Let us now focus on the properties of P-COQS and, for completeness, we start by studying the error bound and DP properties of Algorithm \ref{alg:P-COQS} which directly follow from the properties of the \texttt{PrivQuant} algorithm in \cite{c2}. Indeed, while the latter algorithm is tailored to search over positive integers, as mentioned earlier, Algorithm \ref{alg:P-COQS} adapts the latter to any ordered data type and any set of bounds $[a, b]$ on the real line. Let us therefore discuss (confirm) the DP properties of this adaptation and, to do so, let us first define the noisy count function as
\begin{equation}
\label{eq:zcdp_count}
\texttt{NoisyRC}([a,\text{mid}],\mathcal{S})= \texttt{card}\big(\mathcal{S} \cap [a,\text{mid}] \big) + \mathcal{N}\left(0, \frac{\lceil \log_2(\nicefrac{b-a}{\delta})\rceil}{2\rho}\right),
\end{equation}
where $\rho > 0$ is the privacy parameter under zCDP, with the sensitivity scaled to the maximum number of iterations $N$ for Algorithm \ref{alg:P-COQS} (considering that count query $l_1$-sensitivity is 1), and $\texttt{card}$ denotes the cardinality of set. Let us now denote $u = \nicefrac{b - a}{\delta}$ and let $\Phi$ represent the standard normal CDF. Then, adapting directly from \cite{c2}, the following proposition holds.

\begin{prop}
\label{prop2}
Algorithm \ref{alg:P-COQS} using the noisy count function in \eqref{eq:zcdp_count} is $\rho$-zCDP and returns a quantile with rank error
\[
\tau = \sqrt{\frac{\lceil \log_2 (u)\rceil}{\rho}\log\left(2\frac{\lceil \log_2 \big(u\big)\rceil}{\beta} \right)}
\]
with probability at least $1 - \beta$.
\end{prop}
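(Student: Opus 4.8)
The plan is to prove the two assertions separately: the $\rho$-zCDP guarantee, which I would obtain by composition, and the high-probability rank-error bound, which I would obtain from a Gaussian tail bound combined with a correctness analysis of the bisection. For the privacy claim, I would first note that the \texttt{while} loop executes exactly $N = \lceil \log_2 u \rceil$ times, so Algorithm~\ref{alg:P-COQS} makes precisely $N$ calls to \texttt{NoisyRC}. Each call perturbs the count query $\texttt{card}(\mathcal{S}\cap[a,\text{mid}])$, which has $\ell_2$-sensitivity $1$ (equivalently $\ell_1$-sensitivity, since the query is scalar-valued: adding or removing one score changes the count by at most one), with Gaussian noise of variance $\sigma^2 = N/(2\rho)$. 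By the zCDP guarantee of the Gaussian mechanism, each release is therefore $\frac{1}{2\sigma^2} = \rho/N$-zCDP. Because the midpoints, and hence the queries, are selected adaptively from previous noisy counts, I would invoke the adaptive sequential composition theorem for zCDP (which is additive) to conclude that the $N$ releases jointly satisfy $N\cdot(\rho/N)=\rho$-zCDP; since $q^{\mathrm{DP}}$ is a post-processing of these counts, the full mechanism is $\rho$-zCDP.

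For the error claim, I would define the good event $E$ on which every injected noise term $Z_1,\dots,Z_N \sim \mathcal{N}(0,\sigma^2)$ satisfies $|Z_i|\le\tau$. Applying the sub-Gaussian tail bound $\mathbb{P}(|Z_i|>\tau)\le 2\exp(-\tau^2/(2\sigma^2))$ and a union bound over the $N$ iterations gives $\mathbb{P}(E^{c})\le 2N\exp(-\tau^2/(2\sigma^2))$. Substituting $\sigma^2 = N/(2\rho)$ so that $\tau^2/(2\sigma^2)=\tau^2\rho/N$, setting the right-hand side equal to $\beta$, and solving for $\tau$ recovers exactly $\tau = \sqrt{(\lceil\log_2 u\rceil/\rho)\,\log(2\lceil\log_2 u\rceil/\beta)}$, which yields $\mathbb{P}(E)\ge 1-\beta$.

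It then remains to show that on $E$ the returned value has true rank within $\tau$ of $r$, where I write $F(v)=\texttt{card}(\mathcal{S}\cap[a,v])$ for the non-decreasing, integer-valued empirical count. The key observation is that each branch is taken correctly up to slack $\tau$: whenever the search moves right (the test $c<r$ succeeds), the bound $|Z_i|\le\tau$ forces $F(\text{mid})<r+\tau$, and whenever it moves left ($c\ge r$), it forces $F(\text{mid})\ge r-\tau$. I would encode these into a loop invariant asserting that the level-$r$ crossing of $F$ stays bracketed by $[\text{left},\text{right}]$ up to the $\pm\tau$ slack, so that once the loop terminates with $\text{right}-\text{left}\le\delta$ the returned midpoint $q^{\mathrm{DP}}$ satisfies $F(q^{\mathrm{DP}})\in[r-\tau,r+\tau]$, i.e.\ rank error at most $\tau$.

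The hard part will be precisely this last step, namely formulating and maintaining the bisection invariant. Because $F$ is a step function and the interval is only contracted to width $\delta$ rather than to an exact crossing point, I must track how the integer-valued counts interact with the continuous bisection and the additive offset in the $\text{left}\gets\text{mid}+\delta$ update, and dispose of boundary cases (e.g.\ $r$ near $0$ or $n_{\text{cal}}$, or a score landing in $(\text{mid},\text{mid}+\delta]$). By contrast, I expect the privacy argument and the tail-bound computation to be routine once the Gaussian-mechanism zCDP constant and sequential composition are in hand.
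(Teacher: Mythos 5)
Your proposal is correct and follows essentially the same route as the paper: the privacy claim via the Gaussian mechanism's zCDP constant and additive composition over the $N=\lceil\log_2 u\rceil$ calls, and the error claim via a union bound over the $N$ iterations combined with the Gaussian tail inequality $\mathbb{P}(|Z|>\tau)\le 2e^{-\tau^2/(2\sigma^2)}$, solved for $\tau$ with $\sigma^2=N/(2\rho)$. The one step you flag as the hard part --- the bisection invariant showing that bounding every noise term by $\tau$ actually forces the returned midpoint to have rank error at most $\tau$ --- is in fact omitted from the paper's proof entirely (the paper stops after the tail-bound algebra), so your sketch of that invariant is, if anything, more complete than the published argument.
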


\begin{proof}
The proof is straightforward. Indeed, since the sensitivity of the range query $\texttt{card}\big(\mathcal{S} \cap [a, \text{mid}]\big)$ is 1, then adding noise sampled from $\mathcal{N}\big(0, \nicefrac{\lceil \log_2(u)\rceil}{2\rho}\big)$ will guarantee $\nicefrac{\rho}{\lceil \log_2(u)\rceil}$-zCDP for each call of \texttt{NoisyRC}$([a,\text{mid}],\mathcal{S})$. Since there are $N = \lceil \log_2(u)\rceil$ calls before the algorithm terminates, then by the composition property of zCDP \citep{bun2016}, Algorithm \ref{alg:P-COQS} is $\rho$-zCDP.  Now let $\sigma^2 =\nicefrac{N}{2\rho}$ denote the variance of the Gaussian noise added to guarantee $\rho$-zCDP in \eqref{eq:zcdp_count}. Hence, the deviation between the true count $\texttt{card}(\mathcal{S}\cap [a,\text{mid}])$ and the noisy count $\texttt{NoisyRC}([a,\text{mid}],\mathcal{S})$, which we respresent as the random variable $W$, follows a Gaussian distribution $\mathcal{N}(0,\sigma^2)$. We want to bound the error for the $N$ calls with high probability $1-\beta$, for $\beta\in (0,1)$. For each call of the algorithm, we have

\begin{align*}
    \mathbb{P}(\big| \text{NoisyRC}([L,\text{mid}],D) - \texttt{card}\big(D \cap [L,\text{mid}] \big)\big|> \tau)=\mathbb{P}(|W|>\tau).
\end{align*}

\noindent Since there are exactly $N$ calls before the algorithm terminates, then by union bound and Gaussian tail probability we have that:
$$ \mathbb{P}\Big(\bigcup_{i=1}^N\{|W_i|>\tau\}\Big)\leq N \mathbb{P}\big(|W|>\tau\big)\leq 2Ne^{-\frac{\tau^2}{2\sigma^2}}.$$

\noindent To ensure that this probability be at most $\beta$, we thus estimate the error bound as 
$$2Ne^{-\frac{\tau^2}{2\sigma^2}}\leq \beta \implies  \tau \geq \sigma \sqrt{2 \ln\left(\frac{2N}{\beta}\right)}.$$
Substituting the value of $\sigma$ and $N$ and simplifying we obtain:
$$\tau \geq  \sqrt{\frac{\lceil \log_2 (u)\rceil}{\rho}\log\left(2\frac{\lceil \log_2 \big(u\big)\rceil}{\beta} \right)}.$$


\end{proof}

\begin{remk}
    Assuming $\lceil \log_2(u) \rceil = \log_2(u)$ and using change-of-base we have that
    \begin{equation*}
        \sqrt{\frac{\lceil \log_2 (u)\rceil}{\rho}\log\left(\frac{2\lceil \log_2 \big(u\big)\rceil}{\beta} \right)} = \sqrt{\frac{\log(u)}{\log(2) \rho} \log \left( \frac{2\log(u)}{\log(2) \beta} \right)}.
    \end{equation*}
    Looking at the inner logarithm we have
    $$\log\left( \frac{2 \log(u) }{ \beta \log(2) } \right)
= \log\left( \frac{ \log(u) }{ \beta } \right) + \log\left( \frac{2}{\log (2)} \right),$$
which, using asymptotic approximations, leads to
$$\tau \approx \frac{ \log (u) }{ \rho \log (2) } \cdot \log\left( \frac{ \log (u) }{ \beta } \right),$$
since the second term is constant and the first term therefore dominates in $u$. This gives the same polylogarithmic order of $\tau$ as in \cite{c2} which is near optimal.
\end{remk}

\noindent
These results allow us to quantify to what extent the rank associated with the private quantile $q^{DP}$ from Algorithm \ref{alg:P-COQS} differs from the non-private rank in the calibration data. Firstly, obtaining a $\rho$-zCDP quantile ensures that Algorithm \ref{alg:P-COQS_complete} preserves the desired privacy level. Moreover, Proposition \ref{prop2} allows us to explicitly define this difference $\tau$ with high probability (at least $1 - \beta$). We will see that this information can be used and is indeed helpful when determining the coverage guarantees of P-COQS in Algorithm \ref{alg:P-COQS_complete} which are stated in Theorem \ref{thm:cov_guarantee}. For this purpose, we also define $s_{\text{test}}~=~s(X_{\text{test}}, Y, \hat{f})$.

\begin{theorem}[Coverage Guarantee]
\label{thm:cov_guarantee}
Let $(s_1, \dots, s_{n_{\text{cal}}}, s_{\text{test}})$ be an exchangeable sequence of non-conformity scores. Then, the P-COQS prediction set $\mathcal{C}^{DP}(X_{\text{test}})$ in Algorithm \ref{alg:P-COQS_complete} satisfies:
\[
1 - \alpha - \frac{\tau}{n_{\text{cal}}+1} \leq \mathbb{P}\left[Y_{\text{test}} \in \mathcal{C}^{DP}(X_{\text{test}})\right] \leq 1 - \alpha + \frac{\tau + 1}{n_{\text{cal}}+1}.
\]
\end{theorem}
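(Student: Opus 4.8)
The plan is to reduce the coverage statement to the classical split-conformal computation, treating the private quantile $q^{DP}$ as a data-dependent threshold whose calibration rank is pinned down by Proposition \ref{prop2}. First I would record the single fact from exchangeability that drives everything: since $(s_1,\dots,s_{n_{\text{cal}}},s_{\text{test}})$ is exchangeable and the scores are continuous (hence almost surely distinct), the rank of $s_{\text{test}}$ among the $n_{\text{cal}}+1$ combined scores is uniform on $\{1,\dots,n_{\text{cal}}+1\}$. Consequently, for any \emph{deterministic} integer $k$ one has $\mathbb{P}(s_{\text{test}} \le s_{[k]}) = k/(n_{\text{cal}}+1)$, where $s_{[k]}$ denotes the $k$-th smallest calibration score. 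The essential caveat is that this identity requires $k$ to be fixed, not random.

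Next I would introduce the calibration rank of the private quantile, $r^{DP} = \texttt{card}(\mathcal{S}\cap[a,q^{DP}])$, i.e. the number of calibration scores that are at most $q^{DP}$. By the construction of the binary search with the noisy counter, $q^{DP}$ falls between consecutive order statistics, so $s_{[r^{DP}]} \le q^{DP} < s_{[r^{DP}+1]}$. Proposition \ref{prop2} controls precisely this rank: with probability at least $1-\beta$, the event $E = \{\,|r^{DP} - r|\le\tau\,\}$ holds, where $r = \lceil (1-\alpha)(n_{\text{cal}}+1)\rceil$ is the target non-private rank. The crucial observation is that although $r^{DP}$ is itself random, the sandwiching ranks $r-\tau$ and $r+\tau$ are deterministic, which is exactly what lets me re-enter the exchangeability identity above.

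I would then obtain the two bounds by monotonicity of the count. For the lower bound, on $E$ we have $r^{DP}\ge r-\tau$, hence $s_{[r-\tau]}\le q^{DP}$ and therefore $\{s_{\text{test}}\le s_{[r-\tau]}\}\subseteq\{s_{\text{test}}\le q^{DP}\}$; applying the identity with $k=r-\tau$ and using $r\ge(1-\alpha)(n_{\text{cal}}+1)$ gives $\mathbb{P}(s_{\text{test}}\le q^{DP})\ge (r-\tau)/(n_{\text{cal}}+1)\ge 1-\alpha-\tau/(n_{\text{cal}}+1)$. For the upper bound, on $E$ we have $r^{DP}\le r+\tau$, hence $q^{DP}<s_{[r+\tau+1]}$ and $\{s_{\text{test}}\le q^{DP}\}\subseteq\{s_{\text{test}}<s_{[r+\tau+1]}\}$; applying the identity and inserting the ceiling bound on $r$ yields an upper bound of order $1-\alpha+(\tau+1)/(n_{\text{cal}}+1)$.

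The main obstacle is the interaction between the randomness of $q^{DP}$ and the exchangeability argument, which I resolve by sandwiching with the deterministic ranks $r\pm\tau$ rather than with the random $r^{DP}$ itself. Two pieces of bookkeeping then remain. First, Proposition \ref{prop2} holds only on the event $E$ of probability $1-\beta$, so strictly speaking each displayed inequality acquires an additive $\beta$ correction unless one conditions on $E$; I would either state the guarantee conditionally on $E$ or absorb $\beta$ into the statement, following the paper's convention of treating $\tau$ as the high-probability rank error. Second, the exact constants — the $+1$ in the numerator and the strict versus non-strict inequalities — depend on the ceiling in the definition of $r$ and on the almost-sure distinctness of the scores, so I would treat ties and the boundary order statistics explicitly to land on $(\tau+1)/(n_{\text{cal}}+1)$ rather than a looser $(\tau+2)/(n_{\text{cal}}+1)$.
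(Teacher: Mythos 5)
Your proposal follows essentially the same route as the paper's proof: sandwich $q^{DP}$ between the deterministic order statistics $s_{[r-\tau]}$ and $s_{[r+\tau]}$ via the rank-error bound of Proposition \ref{prop2}, invoke the uniformity of the rank of $s_{\text{test}}$ among the $n_{\text{cal}}+1$ scores under exchangeability, and absorb the ceiling into the $\pm\tau$ and $+1$ terms. If anything you are more careful than the paper, which silently drops both the probability-$(1-\beta)$ qualification on the rank-error event and the distinction between the random rank of $q^{DP}$ and the deterministic ranks $r\pm\tau$ that you make explicit.
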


\begin{proof}
 By Proposition \ref{prop2} we have that Algorithm \ref{alg:P-COQS} outputs a private quantile $q^{DP}$ with rank error $\tau$ ensuring that $s_{r-\tau}\leq s_{r} \leq s_{r+\tau}$. By the exchangeability condition of the non-conformity scores, the rank of $s_{\text{test}} = s(X_{\text{test}}, Y, \hat{f})$ among $(s_1, \dots, s_{n_{\text{cal}}}, s_{\text{test}})$ is uniformly distributed over $\{1, \hdots, n_{\text{cal}}+1\}$. Thus, recalling that $r = \lceil (1-\alpha)(n_{\text{cal}}+1) \rceil$ and that 
 $$Y\in \mathcal{C}^{DP}(X_{\text{test}}) \iff s_{\text{test}} \leq q^{DP},$$
 we consequently have:
\begin{equation*}
    \underbrace{\frac{r -\tau}{n_{\text{cal}}+1}}_{L} \leq \mathbb{P}\left(s_{\text{test}} \leq q^{DP}\right) \leq \underbrace{\frac{r +\tau}{n_{\text{cal}}+1}}_{U},
\end{equation*}
where:
\begin{align*}
    L &= \frac{\lceil (1-\alpha)(n_{\text{cal}}+1) \rceil - \tau}{n_{\text{cal}}+1} \geq 1-\alpha - \frac{\tau}{n_{\text{cal}}+1} \,\,\,\,\,\, \text{and}\\
    U &= \frac{\lceil (1-\alpha)(n_{\text{cal}}+1) \rceil + \tau}{n_{\text{cal}}+1} \leq \frac{(1-\alpha)(n_{\text{cal}}+1) +1 +\tau}{n_{\text{cal}}+1} = 1-\alpha + \frac{\tau+1}{n_{\text{cal}}+1}.
\end{align*}
Therefore this allows us to conclude that
\begin{equation*}
  1-\alpha - \frac{\tau}{n_{\text{cal}}+1} \leq \mathbb{P}\left[Y_{\text{test}} \in \mathcal{C}^{DP}(X_{\text{test}}) \right] \leq 1-\alpha + \frac{\tau+1}{n_{\text{cal}}+1}.
\end{equation*}
\end{proof}

Although Theorem \ref{thm:cov_guarantee} does not ensure the coverage probability $(1 - \alpha)$, as opposed to the method in \cite{c1} that preserves this standard CP guarantee, it establishes an approximate coverage bound with an error term of order $\mathcal{O}(\nicefrac{\tau}{n_{\text{cal}}})$. While this indeed implies an error margin for the actual coverage, it nevertheless also provides an upper bound to the coverage guarantee which allows to understand how conservative and how efficient the resulting prediction sets can be \citep[see e.g.][]{lei2018distribution, li2025conformal}. More in detail, this error depends not only on the size of the calibration set but also on the width of the interval $[a, b]$ and the privacy level $\rho$ under zCDP. Specifically, as shown in Proposition~\ref{prop2}, the coverage error increases with the length of the interval and decreases with the privacy budget. Therefore, achieving accurate coverage requires a sufficiently large calibration set when the parameters of the error bound $\tau$ are fixed. Nevertheless, all the experimental results in Section~\ref{sec:experiments} indicate that P-COQS is able to achieve coverage comparable to the non-private conformal prediction method, despite this weaker theoretical guarantee.


\begin{remk}
    As highlighted by \citet{c2}, the quantile method adapted in Algorithm \ref{alg:P-COQS} is nearly optimal, implying that our results are suboptimal (polylogarithmic) in terms of CP guarantees. Nevertheless, using the bounds in Proposition \ref{prop2} and following remarks, it delivers a practical way to determine the true differentially private coverage guarantee with high probability (i.e. $1 - \beta$) without having to inflate the coverage to mitigate the impact of privacy noise (which can negatively impact the informativeness of the CP sets). For example, suppose that we are dealing with a classification task where the non-conformity scores lie in the interval $[0,1]$, then by fixing the other parameters $\delta$ and $\rho$ (e.g., $\delta = 10^{-10}$ and $\rho = 0.1$) and choosing $\beta = 0.01$, following Proposition \ref{prop2} the true probability of DP coverage of P-COQS is at least $1 - \alpha - 0.01503$ and at most $1 - \alpha + 0.0154$ guaranteed with probability at least $1 - \beta = 0.99$ and calibration size $n_{\text{cal}} = 3000$. 



\end{remk}

%

\section{Experiments}
\label{sec:experiments}




In this section, we evaluate the proposed CP approach (\textsc{P-COQS}) under the same experimental setting as~\cite{c1}. Additionally, we extend this evaluation with controlled simulation experiments. Specifically, we compare \textsc{P-COQS} with the method introduced by~\citet{c1} (henceforth referred to as \textsc{ExponQ}), examining scenarios where both approaches employ either non-private models ($\hat{f}$) or DP models ($\hat{f}^{\mathrm{DP}}$). For differentially private models, we train Na\"ive Bayes and Random Forest models using IBM's \texttt{Diffprivlib} library in Python,\footnote{DP implementations based on IBM's Diffprivlib: \url{https://diffprivlib.readthedocs.io/en/latest/modules/models.html\#classification-models}} ensuring pure $\epsilon$-DP guarantees. For deep neural networks, we adopt the same methodology as~\cite{c1}, utilizing the \texttt{Opacus} library to obtain $(\epsilon, \delta)$-DP models. For DP conformal prediction methods, we leverage the exact equivalence between $\epsilon$-DP and $(\epsilon,0)$-zCDP established in Lemma~3.2 of~\cite{bun2016}. This allows us to analyze both \textsc{P-COQS} and \textsc{ExponQ} under either privacy framework while maintaining identical privacy guarantees. We specifically examine how these methods behave under varying privacy parameters ($\epsilon$) and different sample sizes. Consistent with the experimental design of~\cite{c1}, we restrict our investigation to classification tasks. The non-conformity measure we used is given by the \textit{hinge loss} function:
\[
s(x, y, \hat{f}) = 1 - \hat{f}(x)_y,
\]
where $\hat{f}(x)_y$ represents the predicted probability of the model for the true class $y$. We evaluate the performance of CP methods using the metrics defined below. 

\begin{definition}[Prediction Set Quality Metrics]
Given test points $\{(x_i,y_i)\}_{i=1}^{n_{\text{test}}}$ with corresponding  CP sets $\{\mathcal{C}(x_i)\}_{i=1}^{n_{\text{test}}}$:
\begin{itemize}
  \item  \textbf{(Marginal) Coverage}: proportion of prediction sets that cover the true label:
\[
\frac{1}{n_{\text{test}}} \sum_{i=1}^{n_{\text{test}}} \mathbbm{1} \{ y_i \in \mathcal{C}(x_i) \}
\]

    \item \textbf{Efficiency}: average size of the prediction sets:
    \[
    \frac{1}{n_{\text{test}}} \sum_{i=1}^{n_{\text{test}}} \texttt{card}(\mathcal{C}(x_i))
    \]
    
    \item \textbf{Informativeness} proportion of prediction sets of size 1:
    \[
    \frac{1}{n_{\text{test}}} \sum_{i=1}^{n_{\text{test}}} \mathbbm{1}\{ \texttt{card}(\mathcal{C}(x_i)) = 1 \}
    \]
\end{itemize}
\end{definition}

Therefore, if a CP approach performs well, we expect (i) the coverage to be close to $1 - \alpha$, (ii) the efficiency to be closest to 1 from above as possible, and (iii) the informativeness to be closest to 1 from below as possible. For all settings (simulations and benchmark datasets), where possible, we replicate all experimental settings and parameters used in \cite{c1}.

\subsection{Simulated Data}
\label{sec:simulations}

We consider a low-dimensional binary classification problem with eight features. The feature matrix for class 1 is sampled from a Gaussian $\mathcal{N}(\bm{\mu}_1, \bm{\Sigma}_1)$, with $\bm{\mu}_1 = [0.8, \hdots, 0.8]^T \in \mathbb{R}^8$ and $\bm{\Sigma}_1 = \mathbb{I} \cdot 7 \in \mathbb{R}^{8 \times 8}$, while class 2 follows $\mathcal{N}(\bm{\mu}_2, \bm{\Sigma}_2)$ with $\bm{\mu}_2 = [-1, \hdots, -1]$ and $\bm{\Sigma}_2 = \mathbb{I} \cdot 8$, where $\mathbb{I}$ denotes the identity matrix. This setting delivers a reasonable overlap between the two classes and implies a slightly non-linear decision boundary due to the small difference in the covariance matrices. The generated data is class-balanced, with $ 60\% $ used for the training set, $24\%$ for the calibration set, and $16\%$ reserved for evaluating CP performance. As for the models, considering the slight non-linearity of the decision boundary (and assuming, however, that in reality we would not know the underlying data-generating process), we consider the two classifiers mentioned earlier: Na\"ive Bayes (NB) and Random Forest (RF) (we will see applications with deep neural networks in Section \ref{sec:real_data}). In particular, in this section we will focus on CP performance when employing the DP versions of these models (i.e. $\hat{f}^{DP}$), whereas the results using the non-DP models $\hat{f}$ can be found in the appendix. This being said, we evaluate coverage, efficiency, and informativeness with a fixed privacy budget $\epsilon_f = 2$ when privatizing the models and repeat experimental runs $H = 1000$ times in each setting (unless otherwise specified). Moreover, when varying one of the parameters to study the sensitivity of the methods, the other parameters are fixed at: $\epsilon_{CP} = 1$ (privacy budget for CP); $n = 10,000$ (total sample size to be split); $\alpha = 0.1$ (significance level for coverage). For clarity of presentation, here we only discuss the results for the NB classifier, while we provide the same results for the RF classifier in Appendices \ref{app:NonDPrandom_forest} and \ref{app:DPrandom_forest}.


As mentioned earlier, in the following sections, we study the effects of the privacy budget $\epsilon_{CP}$ for CP in the \textsc{ExponQ} and P-COQS approaches, as well as of the total sample size (which is then split into training, calibration, and test) and significance level $\alpha$. However, it must be stressed that all results are obviously limited to the above data-generating mechanism and therefore are to be interpreted within this setting.

\subsubsection{Effect of CP Privacy Budget ($\epsilon_{CP}$)}

We examine the sensitivity of both \textsc{P-COQS} and \textsc{ExponQ} to variations in the CP privacy parameter $\epsilon_{CP}$, using private NB model while maintaining a fixed sample size of $n = 10,000$. For the DP model (achieving 75\% accuracy), the results in Table~\ref{PrivconformalNB} indicate that \textsc{P-COQS}  shows significantly lower sensitivity to $\epsilon_{CP}$ compared to \textsc{ExponQ}, while comparing favorably to it over the different evaluation metrics across the considered privacy budgets. Similar behavior is observed for the private RF model (see Table~\ref{PrivconformalRF} in Appendix~\ref{prandomforest}) and also when using the non-DP variants of the NB and RF models reported in Table~\ref{npPrivconformalNB} and Table~\ref{npPrivconformalrf} respectively, where as expected, efficiency and informativeness results are slightly better for both approaches.

\begin{table}[h!]
\centering
\resizebox{\textwidth}{!}{%
\begin{tabular}{lcc|cc|cc}
\toprule
\textbf{$\epsilon_{CP}$} 
& \multicolumn{2}{c|}{\textbf{Coverage}} 
& \multicolumn{2}{c|}{\textbf{Efficiency}} 
& \multicolumn{2}{c}{\textbf{Informativeness}} \\
\cmidrule(lr){2-3} \cmidrule(lr){4-5} \cmidrule(lr){6-7}
& \textbf{ExponQ} & \textbf{P-COQS} 
& \textbf{ExponQ} & \textbf{P-COQS} 
& \textbf{ExponQ} & \textbf{P-COQS} \\
\midrule
0.1  & 1.0000 (0.0002) & 0.9000 (0.0107) & 1.9995 (0.0030) & 1.3991 (0.0350) & 0.0005 (0.0030) & 0.6009 (0.0350) \\
0.5  & 0.9402 (0.0141) & 0.8999 (0.0100) & 1.5692 (0.0694) & 1.3985 (0.0326) & 0.4308 (0.0694) & 0.6015 (0.0326) \\
1    & 0.9217 (0.0116) & 0.8999 (0.0098) & 1.4841 (0.0457) & 1.3985 (0.0322) & 0.5159 (0.0457) & 0.6015 (0.0322) \\
1.5  & 0.9148 (0.0104) & 0.9000 (0.0098) & 1.4553 (0.0385) & 1.3986 (0.0319) & 0.5447 (0.0385) & 0.6014 (0.0319) \\
3    & 0.9076 (0.0098) & 0.9000 (0.0097) & 1.4272 (0.0337) & 1.3987 (0.0318) & 0.5728 (0.0337) & 0.6013 (0.0318) \\
5    & 0.9046 (0.0097) & 0.8999 (0.0097) & 1.4160 (0.0325) & 1.3986 (0.0320) & 0.5840 (0.0325) & 0.6014 (0.0320) \\
10   & 0.9024 (0.0096) & 0.8999 (0.0097) & 1.4073 (0.0321) & 1.3986 (0.0320) & 0.5927 (0.0321) & 0.6014 (0.0320) \\
\bottomrule
\end{tabular}%
}
\caption{Average effect of CP privacy budget $\epsilon_{CP}$ on conformal prediction with DP Na\"ive Bayes model. The numbers are metric averages over $1000$ runs (per method and privacy budget) and in parentheses is the corresponding variance of the metrics.}
\label{PrivconformalNB}
\end{table}





\subsubsection{Sample Size Effect}
We investigate the impact of sample size on the performance of the CP methods considered here. For each sample size, the sample is split as mentioned at the start of this section (i.e. $60\%$ training, $24\%$ calibration and $16\%$ test). We recall that we discuss the results with the DP version of NB (see Appendix \ref{samplesizeNB} for the result of the non-DP model $\hat{f}$) and also recall that, as in other experiments, the privacy budget for the models is fixed at $\epsilon_{f} = 2$. The results in Table~\ref{samplesiznb} show that in general P-COQS targets the correct $1 - \alpha$ coverage except in smaller sample settings (that is, $n \leq 200$) and overall performs well compared to \textsc{ExponQ}. More in detail, as discussed in the previous section, it can be seen how the sample size has an important effect on the P-COQS precision (measured in terms of metric variability) where, aside from visibly under-covering with respect to the $1 - \alpha$ target-level, smaller sample sizes show lower precision of P-COQS compared to \textsc{ExponQ} (similar conclusions hold for RF in Appendices ~\ref{samplesizeRF} and \ref{samplesizeNDP_RF}). This confirms the observations and conclusions made from Theorem \ref{thm:cov_guarantee}. Nevertheless, it can be observed that the P-COQS under-covering appears to be solved already starting from $n = 200$ and does not appear to be an issue when using a non-DP model (see Table \ref{samplesizeNB} in the Appendix). In addition, the variability of the P-COQS metrics tend to converge towards that of \textsc{ExponQ} (as also seen in the other simulation settings of this section).

\begin{table}[h!]
\centering
\resizebox{\textwidth}{!}{%
\begin{tabular}{lcc|cc|cc|c}
\toprule
\textbf{$n$} 
& \multicolumn{2}{c|}{\textbf{Coverage}} 
& \multicolumn{2}{c|}{\textbf{Efficiency}} 
& \multicolumn{2}{c|}{\textbf{Informativeness}} 
& \textbf{Model Accuracy} \\
\cmidrule(lr){2-3} \cmidrule(lr){4-5} \cmidrule(lr){6-7}
& \textbf{ExponQ} & \textbf{P-COQS} 
& \textbf{ExponQ} & \textbf{P-COQS} 
& \textbf{ExponQ} & \textbf{P-COQS} 
& \\
\midrule
100    & 0.9791 (0.0447) & 0.8529 (0.1472) & 1.9266 (0.1180) & 1.5958 (0.3184) & 0.0734 (0.1180) & 0.4042 (0.3184) & 0.5867 (0.1248) \\
200    & 0.9954 (0.0148) & 0.9295 (0.0731) & 1.9896 (0.0287) & 1.8077 (0.1663) & 0.0104 (0.0287) & 0.1923 (0.1663) & 0.5543 (0.1039) \\
500    & 0.9969 (0.0074) & 0.9043 (0.0474) & 1.9830 (0.0274) & 1.6153 (0.1206) & 0.0170 (0.0274) & 0.3847 (0.1206) & 0.6289 (0.0612) \\
1000   & 0.9994 (0.0034) & 0.9296 (0.0544) & 1.9963 (0.0147) & 1.7446 (0.1858) & 0.0037 (0.0147) & 0.2554 (0.1858) & 0.6266 (0.0478) \\
2000   & 0.9761 (0.0130) & 0.9014 (0.0220) & 1.8469 (0.0577) & 1.5316 (0.0644) & 0.1531 (0.0577) & 0.4684 (0.0644) & 0.6972 (0.0268) \\
6000   & 0.9340 (0.0178) & 0.9185 (0.0414) & 1.6877 (0.0859) & 1.6216 (0.1890) & 0.3123 (0.0859) & 0.3784 (0.1890) & 0.6949 (0.0227) \\
10000  & 0.9217 (0.0116) & 0.8999 (0.0098) & 1.4841 (0.0457) & 1.3985 (0.0322) & 0.5159 (0.0457) & 0.6015 (0.0322) & 0.7479 (0.0125) \\
\bottomrule
\end{tabular}%
}
\caption{Average effect of sample size on private conformal prediction with DP Na\"ive Bayes model. The numbers are metric averages over $1000$ runs (per method and sample size) and in parentheses is the corresponding variance of the metrics.}
\label{samplesiznb}
\end{table}

\subsubsection{Effect of Significance Level ($\alpha$)}

We examine the performance across varying $\alpha$ values (fixing $n = 10,000$, $\epsilon_{f} = 2$, and $\epsilon_{CP} = 1$). As seen in Table~\ref{alphaeffectNB}, P-COQS appears to better target the required coverage level on average, but has a higher variability than \textsc{ExponQ} in this metric with a small $\alpha = 0.01$ level. This variability then becomes in line with (or smaller than) that of \textsc{ExponQ} with larger values of $\alpha$. With respect to the other metrics, the results indicate that P-COQS generally performs better than \textsc{ExponQ} with the variability of the performance in line with the latter (similar conclusions holds for the DP RF model in Appendix~\ref{alphaeffectRF} as well as when using non-DP NB and RF models in Appendices~\ref{alpha_NP_NB} and \ref{alpha_NP_RF} respectively).

\begin{table}[h!]
\centering
\resizebox{\textwidth}{!}{%
\begin{tabular}{lcc|cc|cc}
\toprule
\textbf{$\alpha$} 
& \multicolumn{2}{c|}{\textbf{Coverage}} 
& \multicolumn{2}{c|}{\textbf{Efficiency}} 
& \multicolumn{2}{c}{\textbf{Informativeness}} \\
\cmidrule(lr){2-3} \cmidrule(lr){4-5} \cmidrule(lr){6-7}
& \textbf{ExponQ} & \textbf{P-COQS} 
& \textbf{ExponQ} & \textbf{P-COQS} 
& \textbf{ExponQ} & \textbf{P-COQS} \\
\midrule
0.01  & 1.0000 (0.0002) & 0.9988 (0.0045) & 1.9995 (0.0030) & 1.9857 (0.0511) & 0.0005 (0.0030) & 0.0143 (0.0511) \\
0.05  & 0.9732 (0.0088) & 0.9500 (0.0076) & 1.7579 (0.0604) & 1.6152 (0.0373) & 0.2421 (0.0604) & 0.3848 (0.0373) \\
0.10  & 0.9217 (0.0116) & 0.8999 (0.0098) & 1.4841 (0.0457) & 1.3985 (0.0322) & 0.5159 (0.0457) & 0.6015 (0.0322) \\
\bottomrule
\end{tabular}%
}
\caption{Average effect of $\alpha$ on private conformal prediction using DP Na\"ive Bayes model. The numbers are metric averages over $1000$ runs (per method and significance level) and in parentheses is the corresponding variance of the metrics.}
\label{alphaeffectNB}
\end{table}

\subsubsection{Computational Efficiency}


We compare the runtime of the considered methods under two settings which considered the non-DP and DP versions of each model (i.e. $\hat{f}$ and $\hat{f}^{DP}$ respectively). Confirming the model privacy budget to be $\epsilon_{f} = 2$, conformal privacy budget $\epsilon_{CP} = 1$, total sample size  $n = 10,000$ and significance level $\alpha = 0.1$, we record the average runtime over 1000 trials. Table~\ref{averagetimeNB} highlights the considerable computational gain that P-COQS delivers with respect to \textsc{ExponQ} while remaining comparable (or often better) across all performance metrics considered above (similar conclusions holds for the RF model in Appendix~\ref{AveragetimeRF}). It must be underlined that, as mentioned in Section \ref{sec:rel_work}, the increased computational time for \textsc{ExponQ} is mainly due to the optimization subroutine needed to choose the optimal number of bins and the quantile-inflation hyperparameter to achieve the smallest possible prediction set.


\begin{table}[h!]
\centering
\resizebox{\textwidth}{!}{%
\begin{tabular}{lcccccc}
    \toprule
    & Coverage & Efficiency & Informativeness & Model Accuracy & Ave. time (secs) \\
    \midrule
    $\hat{f}$ and \textsc{ExponQ} & 0.9227 (0.0116) & 1.2509 (0.0360) & 0.7491 (0.0360) & 0.8253 (0.0092) & 0.7562 (0.0111) \\
    $\hat{f}$ and P-COQS & 0.9006 (0.0010) & 1.1788 (0.0201) & 0.8212 (0.0201) & 0.8253 (0.0092) & 0.0072 (0.0002)  \\
    $\hat{f}^{DP}$ and \textsc{ExponQ} & 0.9217 (0.0116) & 1.4841 (0.0457) & 0.5159 (0.0457) & 0.7479 (0.0125) & 0.7534 (0.0152) \\
    $\hat{f}^{DP}$ and P-COQS & 0.8999 (0.0098) & 1.3985 (0.0322) & 0.6015 (0.0322) & 0.7479 (0.0125) & 0.0072 (0.0003) \\
    \bottomrule
\end{tabular}%
}
\caption{Average computational time of P-COQS and \textsc{ExponQ} combined with non-DP and DP versions of the NB classifier.}
\label{averagetimeNB}
\end{table}

\subsection{Datasets}
\label{sec:real_data}


In this section, we take the same settings as in \cite{c1} and study the performance of P-COQS when applied to three benchmark datasets: CIFAR-10 \citep{krizhevsky2009}, ImageNet \citep{deng2009} and CoronaHack \citep{perez2020}. As in the simulation experiment, we compare the performance of P-COQS with that of \textsc{ExponQ} proposed by \citet{c1} and follow the same experimental setup as in the latter. Unless otherwise stated, the significance level is fixed at $\alpha = 0.1$, CP privacy budget at $\epsilon_{CP} = 1$ and, when used, the DP models are trained to achieve $(\epsilon, \delta)$-DP using the \texttt{Opacus} library with $(\epsilon = 8$ and $\delta = 10^{-5})$.


\subsubsection{CIFAR-10 Benchmark Analysis}

We first evaluate P-COQS on the CIFAR-10 dataset, comparing its performance to \textsc{ExponQ} under different privacy settings. Following \citet{c1}, we consider two scenarios for model training: (i) one where a non-DP model is used ($\hat{f}$) and (ii) one where a DP model is used ($\hat{f}^{DP}$). Both DP and non-DP models share the same convolutional neural network architecture. The DP model achieved an accuracy of $60\%$ compared to $77\%$ for the non-DP model. The evaluation is based on 1000 random splits of the CIFAR-10 validation set, each of size $n = 5000$. Figure~\ref{fig:results_cf10} displays the empirical coverage and prediction set sizes across these settings. Our findings suggest that P-COQS generally produces smaller prediction sets (Figure~\ref{fig:sizes_cf_10}), while achieving empirical coverage levels that are comparable to those of the standard (non-private) CP baseline (Figure~\ref{fig:coverage_cf_10}), except for the case where a DP model is used where P-COQS appears to slightly undercover with respect to the standard CP baseline. On the other hand, compared to P-COQS, by inflating the quantile \textsc{ExponQ} guarantees the coverage level but tends to have larger prediction sets for this reason. In the latter case, for all methods it can be seen that prediction sets are larger when using a DP model (as expected).


\begin{figure}[ht]
    \centering
    \begin{minipage}[t]{0.49\textwidth}
        \centering
        \includegraphics[height=4.5cm]{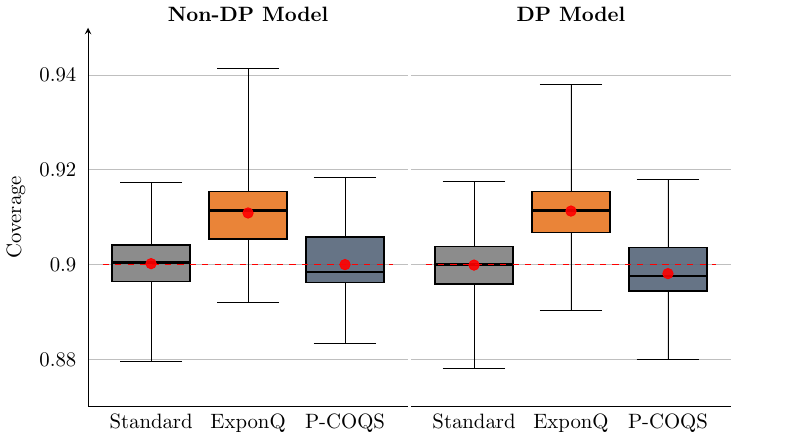}
        \subcaption{Coverage}
        \label{fig:coverage_cf_10}
    \end{minipage}
    \hfill
    \begin{minipage}[t]{0.49\textwidth}
        \centering
        \includegraphics[height=4.5cm]{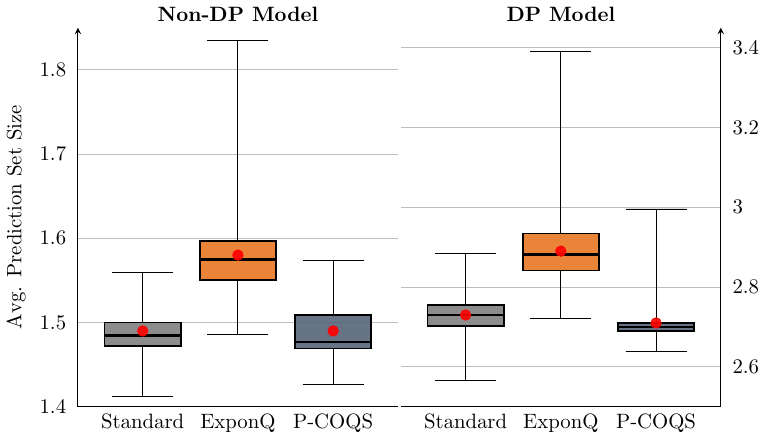}
        \subcaption{Avg. Prediction Set Size}
        \label{fig:sizes_cf_10}
    \end{minipage}
    \caption{Results of Standard, \textsc{ExponQ} and P-COQS methods applied to the CIFAR-10 dataset. (a) Boxplots with coverage distributions of the three methods under the non-DP model (left) and DP model (right) scenarios; (b) Boxplots with distributions of average set sizes of the three methods under the non-DP model (left) and DP model (right) scenarios.}
    \label{fig:results_cf10}
\end{figure}

\subsubsection{Large-Scale Evaluation on ImageNet}

We apply both considered private CP methods to the ImageNet dataset using a pre-trained (non-DP) ResNet-152 model. We evaluated the performance of P-COQS and \textsc{ExponQ} under varying values of the conformal privacy budget $\epsilon_{CP}$. The evaluation uses $n = 30,000$ samples for calibration and $20,000$ samples for validation. For each budget value $\epsilon_{CP}$, performance metrics are computed over 100 random splits of the ImageNet validation set, following the procedure in \citet{c1}. As shown in Figure~\ref{fig:results_imagenet}, P-COQS appears to slightly under-cover but this does not appear to be significant based on the 95\% confidence intervals (red vertical whiskers in the plots) and remains stable in performance across the different values of conformal privacy budget, whereas \textsc{ExponQ} appears to significantly over-cover for the small budgets (see Figure \ref{fig:coverage_imagenet}). With this in mind, P-COQS also appears to be stable with respect to prediction set sizes while \textsc{ExponQ} approaches these sizes only as $\epsilon_{CP}$ increases (as highlighted also in the simulation results in Section \ref{sec:simulations}).


\begin{figure}[ht]
    \centering
    \begin{minipage}[t]{0.49\textwidth}
        \centering
        \includegraphics[height=5cm]{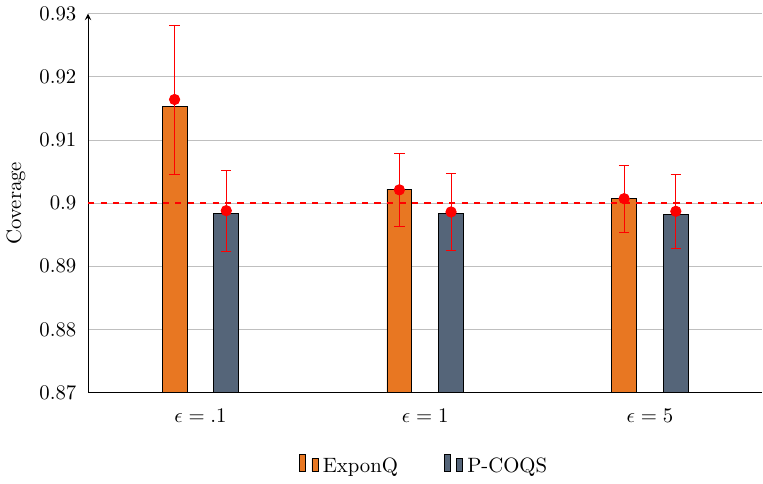}
        \subcaption{Coverage}
        \label{fig:coverage_imagenet}
    \end{minipage}
    \begin{minipage}[t]{0.49\textwidth}
        \centering
        \includegraphics[height=5cm]{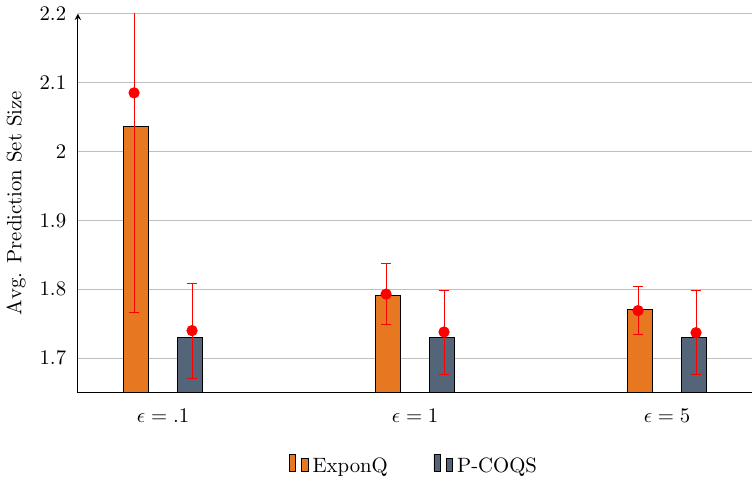}
        \subcaption{Avg. Prediction Set Size}
        \label{fig:sizes_imagenet}
    \end{minipage}
    \caption{Results of \textsc{ExponQ} and P-COQS methods applied to the ImageNet dataset. (a) Median coverage (bar height), mean coverage (red dots) and 95\% confidence intervals for coverage (red vertical whiskers) for each method under different privacy budgets $\epsilon$ (red-dashed horizontal line is the target coverage); (b) Average prediction set sizes (average set sizes in 100 splits): median (bar height), mean (red dots) and 95\% confidence intervals (red vertical whiskers) for each method under different privacy budgets $\epsilon$.}
    \label{fig:results_imagenet}
\end{figure}

\subsubsection{Medical Imaging Analysis with CoronaHack Dataset}

Finally, we evaluate both methods on the CoronaHack dataset, a publicly available chest X-ray dataset comprising 5908 images categorized into three classes: normal, viral pneumonia (primarily COVID-19), and bacterial pneumonia. We fine-tuned the final layer of a ResNet-18 model over 14 epochs using 4408 training samples, under both DP and non-DP settings. The DP model achieved $64\%$ accuracy compared to $70\%$ for the non-DP model. CP was calibrated on 1000 samples and validated on 500 samples, across 1000 random splits of the data. Figure~\ref{fig:results_ch} presents the results for different configurations of the DP and non-DP models. While \textsc{ExponQ} tends to produce overly conservative prediction sets, P-COQS appears to better target the nominal coverage level of $0.9$ (see Figure \ref{fig:coverage_ch}). Moreover, looking at the prediction set sizes, the distribution of the P-COQS appears to deliver roughly similar set sizes to the non-private CP approach since the proportions of sets of specific sizes (1, 2 or 3 on the x-axis) are very close to each other (see Figure \ref{fig:sizes_ch}). Also in this case \textsc{ExponQ} produces larger sets (i.e. higher proportions for sets of size 2 or 3), where again the sizes tend to increase overall when using the DP model.

\begin{figure}[ht]
    \centering
    \begin{minipage}[t]{0.43\textwidth}
        \raggedright
        \includegraphics[height=4.8cm]{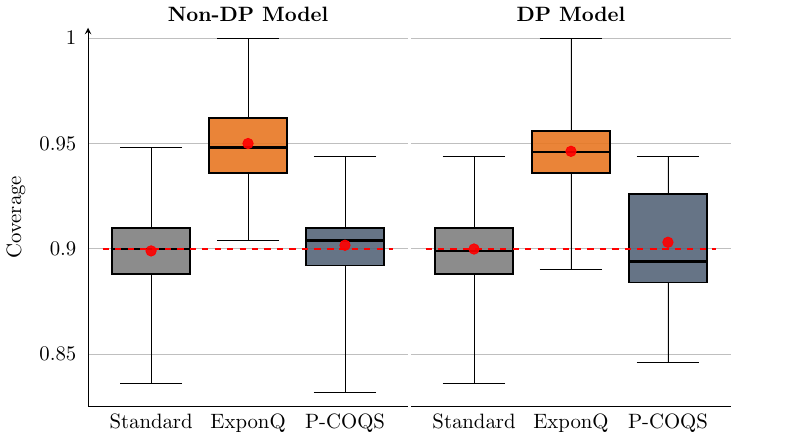}
        \subcaption{Coverage}
        \label{fig:coverage_ch}
    \end{minipage}
    \hfill
    \begin{minipage}[t]{0.53\textwidth}
        \centering
        \includegraphics[height=4.7cm]{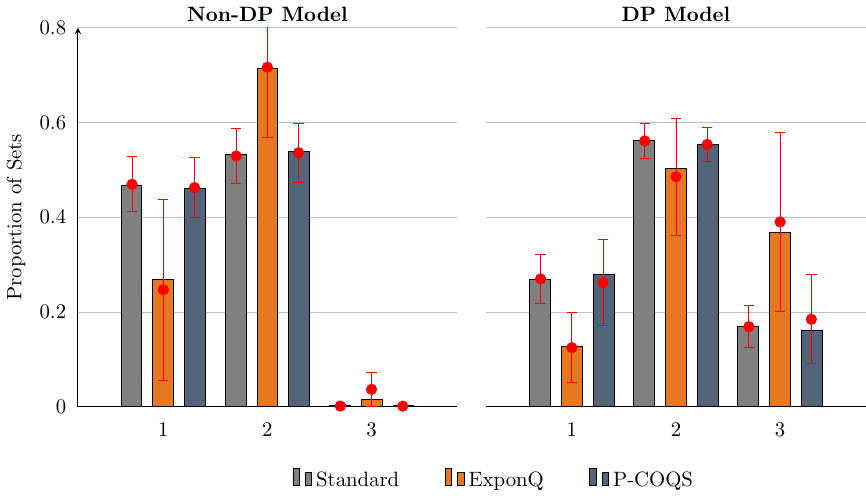}
        \subcaption{Distribution of Prediction Set Sizes}
        \label{fig:sizes_ch}
    \end{minipage}
    \caption{Results of Standard, \textsc{ExponQ} and P-COQS methods applied to the CoronaHack dataset. (a) Boxplots with coverage distributions of the three methods under the non-DP model (left) and DP model (right) scenarios; (b) Distribution of prediction set sizes where the y-axis represents the empirical proportion of sets of size 1, 2 or 3 (represented on the x-axis) for each method: median proportion (bar height), mean proportion (red dots) and 95\% confidence intervals (red vertical whiskers) for each method under the non-DP model (left) and DP model (right) scenarios.}
    \label{fig:results_ch}
\end{figure}


\section{Discussion}

The proposed P-COQS provides a computationally efficient DP alternative to existing privacy-preserving approaches. It does so by adapting an existing DP quantile method and trading off some theoretical guarantees on the lower bound for the coverage probability, while however guaranteeing quantifiable lower and upper error bounds for this probability. In this way, it is possible to determine (tight) error bounds which can then be used to assess the true minimum coverage level with high probability. Despite this theoretical trade-off, which however remains small (if not negligible) in common settings, the P-COQS shows an empirical performance over different metrics that is in line with standard non-private CP, indicating that it targets the correct coverage level in all settings while delivering more precise CP sets compared to the existing DP alternative (that we denoted as \textsc{ExponQ}). The theoretical trade-off of the P-COQS can be observed mainly in settings with small sample sizes and/or privacy budgets in which the variability of P-COQS metrics is indeed larger compared to \textsc{ExponQ}. However, the latter approach also makes a trade-off by generally over-covering with respect to the required coverage level in smaller samples and for small privacy budgets. As a consequence, the latter tends to deliver less efficient and informative prediction sets in these settings and appear to only approach those of P-COQS for larger privacy budgets and/or sample sizes. Overall, considering these different trade-offs and without claiming the existence of a better approach, the proposed P-COQS seems to provide a valid privacy-preserving (and computationally efficient) CP approach based also on the generally good performance compared to \textsc{ExponQ} in the different experimental settings considered.





\acks{
In preparing this work, ChatGPT 4o was used to outline and proofread. The authors subsequently carefully reviewed and edited the content to ensure accuracy and coherence. The authors take full responsibility for the integrity of this work. This work has been funded by NSF-SES 2150615.}


\newpage

\bibliography{references.bib}

\appendix

\section{Simulation Results With Non-DP Random Forest Model}\label{app:NonDPrandom_forest}

\subsection{Effect of $\epsilon_{CP}$ with non-DP model}\label{nprandomforest}

The parameter settings are as in the non-DP NB model and the model's accuracy is $81\%$.

\begin{table}[H]
\centering
\resizebox{\textwidth}{!}{%
\begin{tabular}{lcc|cc|cc}
\toprule
\textbf{$\epsilon_{\text{conformal}}$} 
& \multicolumn{2}{c|}{\textbf{Coverage}} 
& \multicolumn{2}{c|}{\textbf{Efficiency}} 
& \multicolumn{2}{c}{\textbf{Informativeness}} \\
\cmidrule(lr){2-3} \cmidrule(lr){4-5} \cmidrule(lr){6-7}
& \textbf{ExponQ} & \textbf{P-COQS} 
& \textbf{ExponQ} & \textbf{P-COQS} 
& \textbf{ExponQ} & \textbf{P-COQS} \\
\midrule
0.1  & 0.9998 (0.0004) & 0.9024 (0.0104) & 1.9885 (0.0123) & 1.2222 (0.0243) & 0.0115 (0.0123) & 0.7778 (0.0243) \\
0.5  & 0.9464 (0.0188) & 0.9025 (0.0098) & 1.4094 (0.1213) & 1.2223 (0.0220) & 0.5906 (0.1213) & 0.7777 (0.0220) \\
1    & 0.9228 (0.0123) & 0.9026 (0.0098) & 1.2936 (0.0408) & 1.2224 (0.0218) & 0.7064 (0.0408) & 0.7776 (0.0218) \\
1.5  & 0.9154 (0.0107) & 0.9026 (0.0098) & 1.2660 (0.0294) & 1.2225 (0.0217) & 0.7340 (0.0294) & 0.7775 (0.0217) \\
3    & 0.9084 (0.0099) & 0.9026 (0.0098) & 1.2415 (0.0235) & 1.2225 (0.0220) & 0.7585 (0.0235) & 0.7775 (0.0220) \\
5    & 0.9058 (0.0099) & 0.9026 (0.0099) & 1.2327 (0.0224) & 1.2225 (0.0220) & 0.7673 (0.0224) & 0.7775 (0.0220) \\
10   & 0.9041 (0.0100) & 0.9026 (0.0099) & 1.2273 (0.0224) & 1.2226 (0.0219) & 0.7727 (0.0224) & 0.7774 (0.0219) \\
\bottomrule
\end{tabular}%
}
\caption{Average effect of CP privacy budget $\epsilon_{CP}$ on conformal prediction with non-DP Random Forest model. The numbers are metric averages over $1000$ runs (per method and privacy budget) and in parentheses is the corresponding variance of the metrics.}
\label{npPrivconformalrf}
\end{table}

\subsection{Sample size effect with non-DP model}
\label{samplesizeNDP_RF}
\begin{table}[H]
\centering
\resizebox{\textwidth}{!}{%
\begin{tabular}{lcc|cc|cc}
\toprule
\textbf{Samples} 
& \multicolumn{2}{c|}{\textbf{Coverage}} 
& \multicolumn{2}{c|}{\textbf{Efficiency}} 
& \multicolumn{2}{c}{\textbf{Informativeness}} \\
\cmidrule(lr){2-3} \cmidrule(lr){4-5} \cmidrule(lr){6-7}
& \textbf{ExponQ} & \textbf{P-COQS} 
& \textbf{ExponQ} & \textbf{P-COQS} 
& \textbf{ExponQ} & \textbf{P-COQS} \\
\midrule
100  & 0.9892 (0.0326) & 0.9280 (0.0973) & 1.8658 (0.1716) & 1.5599 (0.3114) & 0.1342 (0.1716) & 0.4401 (0.3114) \\
200  & 0.9931 (0.0183) & 0.9298 (0.0678) & 1.8925 (0.1308) & 1.4749 (0.2301) & 0.1075 (0.1308) & 0.5251 (0.2301) \\
500  & 0.9973 (0.0076) & 0.9083 (0.0455) & 1.9274 (0.0872) & 1.3108 (0.1210) & 0.0726 (0.0872) & 0.6892 (0.1210) \\
1000 & 0.9987 (0.0036) & 0.9056 (0.0304) & 1.9520 (0.0577) & 1.2706 (0.0786) & 0.0480 (0.0577) & 0.7294 (0.0786) \\
2000 & 0.9865 (0.0126) & 0.9054 (0.0227) & 1.7402 (0.1492) & 1.2501 (0.0561) & 0.2598 (0.1492) & 0.7499 (0.0561) \\
6000 & 0.9393 (0.0181) & 0.9026 (0.0123) & 1.3745 (0.0986) & 1.2268 (0.0293) & 0.6255 (0.0986) & 0.7732 (0.0293) \\
10000 & 0.9229 (0.0120) & 0.9025 (0.0098) & 1.2939 (0.0406) & 1.2224 (0.0222) & 0.7061 (0.0406) & 0.7776 (0.0222) \\
\bottomrule
\end{tabular}%
}
\caption{Average effect of sample size on DP CP ( $\epsilon_{CP}=1$) with non-DP Random Forest model. The numbers are metric averages over $1000$ runs (per method and sample size) and in parentheses is the corresponding variance of the metrics. Average effect of sample size on private CP with Non-DP Random Forest model.}
\label{tab:random_forest_samples}
\end{table}

\subsection{Effect of $\alpha$ with non-DP model}\label{alpha_NP_RF}
\begin{table}[H]
\centering
\resizebox{\textwidth}{!}{%
\begin{tabular}{lcc|cc|cc}
\toprule
\textbf{$\alpha$} 
& \multicolumn{2}{c|}{\textbf{Coverage}} 
& \multicolumn{2}{c|}{\textbf{Efficiency}} 
& \multicolumn{2}{c}{\textbf{Informativeness}} \\
\cmidrule(lr){2-3} \cmidrule(lr){4-5} \cmidrule(lr){6-7}
& \textbf{ExponQ} & \textbf{P-COQS} 
& \textbf{ExponQ} & \textbf{P-COQS} 
& \textbf{ExponQ} & \textbf{P-COQS} \\
\midrule
0.01  & 0.9998 (0.0004) & 0.9909 (0.0033) & 1.9885 (0.0123) & 1.7554 (0.0370) & 0.0115 (0.0123) & 0.2446 (0.0370) \\
0.05  & 0.9800 (0.0120) & 0.9516 (0.0072) & 1.6512 (0.1426) & 1.4204 (0.0275) & 0.3488 (0.1426) & 0.5796 (0.0275) \\
0.10  & 0.9229 (0.0120) & 0.9025 (0.0098) & 1.2939 (0.0406) & 1.2224 (0.0222) & 0.7061 (0.0406) & 0.7776 (0.0222) \\
\bottomrule
\end{tabular}%
}
\caption{Average effect of $\alpha$ on DP CP ($\epsilon_{CP}=1$) with non-DP Random Forest model. The numbers are metric averages over $1000$ runs (per method and $\alpha$ value) and in parentheses is the corresponding variance of the metrics.}
\label{tab:random_forest}
\end{table}

\section{Simulation Results With DP Random Forest Model}
\label{app:DPrandom_forest}
The parameter settings are as in the DP NB model and the model's accuracy is $79\%$.
\subsection{Effect of $\epsilon_{CP}$ with DP model}
\label{prandomforest}

\begin{table}[H]
\centering
\resizebox{\textwidth}{!}{%
\begin{tabular}{lcc|cc|cc}
\toprule
\textbf{$\epsilon_{\text{conformal}}$} 
& \multicolumn{2}{c|}{\textbf{Coverage}} 
& \multicolumn{2}{c|}{\textbf{Efficiency}} 
& \multicolumn{2}{c}{\textbf{Informativeness}} \\
\cmidrule(lr){2-3} \cmidrule(lr){4-5} \cmidrule(lr){6-7}
& \textbf{ExponQ} & \textbf{P-COQS} 
& \textbf{ExponQ} & \textbf{P-COQS} 
& \textbf{ExponQ} & \textbf{P-COQS} \\
\midrule
0.1  & 1.0000 (0.0002) & 0.8982 (0.0112) & 1.9970 (0.0087) & 1.2552 (0.0438) & 0.0030 (0.0087) & 0.7448 (0.0438) \\
0.5  & 0.9575 (0.0272) & 0.8977 (0.0107) & 1.5621 (0.2378) & 1.2535 (0.0438) & 0.4379 (0.2378) & 0.7465 (0.0438) \\
1    & 0.9242 (0.0123) & 0.8977 (0.0106) & 1.3459 (0.0555) & 1.2533 (0.0438) & 0.6541 (0.0555) & 0.7467 (0.0438) \\
1.5  & 0.9171 (0.0105) & 0.8976 (0.0106) & 1.3183 (0.0465) & 1.2532 (0.0439) & 0.6817 (0.0465) & 0.7468 (0.0439) \\
3    & 0.9096 (0.0112) & 0.8977 (0.0106) & 1.2921 (0.0471) & 1.2534 (0.0442) & 0.7079 (0.0471) & 0.7466 (0.0442) \\
5    & 0.9058 (0.0121) & 0.8976 (0.0105) & 1.2801 (0.0505) & 1.2531 (0.0440) & 0.7199 (0.0505) & 0.7469 (0.0440) \\
10   & 0.9034 (0.0129) & 0.8975 (0.0104) & 1.2726 (0.0526) & 1.2530 (0.0438) & 0.7274 (0.0526) & 0.7470 (0.0438) \\
\bottomrule
\end{tabular}%
}
\caption{Average effect of CP privacy budget $\epsilon_{CP}$ on conformal prediction with DP Random Forest model and $\epsilon_{f}=2$. The numbers are metric averages over $1000$ runs (per method and privacy budget) and in parentheses is the corresponding variance of the metrics.}
\label{PrivconformalRF}
\end{table}

\subsection{Sample size effect with DP model}
\label{samplesizeRF}

\begin{table}[H]
\centering
\resizebox{\textwidth}{!}{%
\begin{tabular}{lcc|cc|cc|c}
\toprule
\textbf{Sample Size} 
& \multicolumn{2}{c|}{\textbf{Coverage}} 
& \multicolumn{2}{c|}{\textbf{Efficiency}} 
& \multicolumn{2}{c|}{\textbf{Informativeness}} 
& \textbf{Model Accuracy} \\
\cmidrule(lr){2-3} \cmidrule(lr){4-5} \cmidrule(lr){6-7}
& \textbf{ExponQ} & \textbf{P-COQS} 
& \textbf{ExponQ} & \textbf{P-COQS} 
& \textbf{ExponQ} & \textbf{P-COQS} 
& \\
\midrule
100    & 0.9979 (0.0142) & 0.9669 (0.0800) & 1.9977 (0.0146) & 1.9534 (0.1166) & 0.0022 (0.0146) & 0.0466 (0.1166) & 0.4201 (0.1141) \\
200    & 0.9993 (0.0060) & 0.9597 (0.0653) & 1.9990 (0.0078) & 1.9306 (0.1092) & 0.0010 (0.0078) & 0.0694 (0.1092) & 0.4503 (0.0975) \\
500    & 0.9998 (0.0019) & 0.9098 (0.0484) & 1.9983 (0.0094) & 1.6637 (0.1346) & 0.0017 (0.0094) & 0.3364 (0.1346) & 0.6183 (0.0776) \\
1000   & 0.9999 (0.0011) & 0.9060 (0.0321) & 1.9977 (0.0110) & 1.5422 (0.1067) & 0.0023 (0.0110) & 0.4578 (0.1067) & 0.6808 (0.0604) \\
2000   & 0.9978 (0.0065) & 0.9066 (0.0226) & 1.9678 (0.0756) & 1.4203 (0.0800) & 0.0322 (0.0756) & 0.5797 (0.0800) & 0.7371 (0.0455) \\
6000   & 0.9467 (0.0258) & 0.9009 (0.0133) & 1.5064 (0.1986) & 1.2913 (0.0519) & 0.4936 (0.1986) & 0.7087 (0.0519) & 0.7836 (0.0239) \\
10000  & 0.9242 (0.0123) & 0.8977 (0.0106) & 1.3459 (0.0555) & 1.2533 (0.0438) & 0.6541 (0.0555) & 0.7467 (0.0438) & 0.7941 (0.0173) \\
\bottomrule
\end{tabular}%
}
\caption{Average effect of sample size on DP CP ( $\epsilon_{CP}=1$) with DP Random Forest model ($\epsilon_{f}=2$). The numbers are metric averages over $1000$ runs (per method and sample size) and in parentheses is the corresponding variance of the metrics.}
\label{samplesizrf}
\end{table}

\subsection{Effect of $\alpha$ with DP model}
\label{alphaeffectRF}
\begin{table}[H]
\centering
\resizebox{\textwidth}{!}{%
\begin{tabular}{lcc|cc|cc}
\toprule
\textbf{$\alpha$} 
& \multicolumn{2}{c|}{\textbf{Coverage}} 
& \multicolumn{2}{c|}{\textbf{Efficiency}} 
& \multicolumn{2}{c}{\textbf{Informativeness}} \\
\cmidrule(lr){2-3} \cmidrule(lr){4-5} \cmidrule(lr){6-7}
& \textbf{ExponQ} & \textbf{P-COQS} 
& \textbf{ExponQ} & \textbf{P-COQS} 
& \textbf{ExponQ} & \textbf{P-COQS} \\
\midrule
0.01  & 1.0000 (0.0002) & 0.9904 (0.0036) & 1.9970 (0.0087) & 1.7604 (0.0422) & 0.0030 (0.0087) & 0.2396 (0.0422) \\
0.05  & 0.9914 (0.0121) & 0.9484 (0.0077) & 1.8486 (0.1751) & 1.4475 (0.0435) & 0.1514 (0.1751) & 0.5525 (0.0435) \\
0.10  & 0.9242 (0.0123) & 0.8977 (0.0106) & 1.3459 (0.0555) & 1.2533 (0.0438) & 0.6541 (0.0555) & 0.7467 (0.0438) \\
\bottomrule
\end{tabular}%
}
\caption{Average effect of $\alpha$ on DP CP ($\epsilon_{CP}=1$) with DP Random Forest model ($\epsilon_{f}=2$). The numbers are metric averages over $1000$ runs (per method and $\alpha$ value) and in parentheses is the corresponding variance of the metrics.}
\label{npPrivconformal}
\end{table}

\subsection{Average CP computational time with DP and non-DP models}\label{AveragetimeRF}

\begin{table}[H]
\centering
\resizebox{\textwidth}{!}{%
\begin{tabular}{lcccccc}
    \toprule
   Description & Empirical coverage & Efficiency & Informativeness & Model Accuracy & Ave. time (secs) \\
    \midrule
    $\hat{f}$ and \textsc{ExponQ} & 0.9229 (0.0120) & 1.2939 (0.0406) & 0.7061 (0.0406) & 0.8125 (0.0093) & 0.7353 (0.0098)  \\
    $\hat{f}$ and P-COQS & 0.9025 (0.0098) & 1.2224 (0.0222) & 0.7776 (0.0222) & 0.8125 (0.0093) & 0.0074 (0.0002) \\
    $\hat{f}^{DP}$ and \textsc{ExponQ} & 0.9242 (0.0123) & 1.3459 (0.0555) & 0.6541 (0.0555) & 0.7941 (0.0173) & 0.7390 (0.0088) \\
    $\hat{f}^{DP}$ and P-COQS &0.8977 (0.0106) & 1.2533 (0.0443) & 0.7467 (0.0443) & 0.7941 (0.0173) & 0.0074 (0.0002) \\
    \bottomrule
\end{tabular}%
}
\caption{Average computational time of P-COQS and \textsc{ExponQ} combined with non-DP and DP versions of the Random Forest classifier. The numbers are metric averages over $1000$ runs (per setting) and in parentheses is the corresponding variance of the metrics.}
\label{averagetimeRF}
\end{table}

\subsection{Effect of Model Privacy Budget ($\epsilon_{f}$)}
\begin{table}[H]
\centering
\resizebox{\textwidth}{!}{%
\begin{tabular}{lcc|cc|cc|c}
\toprule
\textbf{$\epsilon_{f}$} 
& \multicolumn{2}{c|}{\textbf{Coverage}} 
& \multicolumn{2}{c|}{\textbf{Efficiency}} 
& \multicolumn{2}{c|}{\textbf{Informativeness}} 
& \textbf{Model Accuracy} \\
\cmidrule(lr){2-3} \cmidrule(lr){4-5} \cmidrule(lr){6-7}
& \textbf{ExponQ} & \textbf{P-COQS} 
& \textbf{ExponQ} & \textbf{P-COQS} 
& \textbf{ExponQ} & \textbf{P-COQS} 
& \\
\midrule
0.1  & 0.9256 (0.0142) & 0.9021 (0.0167) & 1.5647 (0.0841) & 1.4891 (0.0723) & 0.4353 (0.0841) & 0.5109 (0.0723) & 0.6870 (0.0591) \\
0.5  & 0.9248 (0.0126) & 0.8999 (0.0112) & 1.3839 (0.0674) & 1.2956 (0.0530) & 0.6161 (0.0674) & 0.7044 (0.0530) & 0.7803 (0.0242) \\
1    & 0.9241 (0.0127) & 0.8982 (0.0104) & 1.3575 (0.0611) & 1.2664 (0.0454) & 0.6425 (0.0611) & 0.7336 (0.0454) & 0.7898 (0.0187) \\
2    & 0.9242 (0.0123) & 0.8977 (0.0106) & 1.3459 (0.0555) & 1.2533 (0.0438) & 0.6541 (0.0555) & 0.7467 (0.0438) & 0.7941 (0.0173) \\
5    & 0.9241 (0.0123) & 0.8973 (0.0104) & 1.3405 (0.0546) & 1.2481 (0.0407) & 0.6595 (0.0546) & 0.7519 (0.0407) & 0.7957 (0.0167) \\
10   & 0.9241 (0.0122) & 0.8972 (0.0104) & 1.3405 (0.0542) & 1.2479 (0.0409) & 0.6595 (0.0542) & 0.7521 (0.0409) & 0.7958 (0.0168) \\
\bottomrule
\end{tabular}%
}
\caption{Average effect of model privacy budget $\epsilon_{f}$ on DP CP ($\epsilon_{CP}=1$) with DP Random Forest model. The numbers are metric averages over $1000$ runs (per method and privacy budget) and in parentheses is the corresponding variance of the metrics.}
\label{modelprivacyRF}
\end{table}

\section{Simulation Results With Non-DP Na\"ive Bayes Model}\label{app:Nonprivate_Naive_Bayes}
The parameter settings are as described in Section \ref{sec:simulations} and the accuracy of the model is $83\%$.
\subsection{Effect of Privacy Budget ($\epsilon_{CP}$)}\label{npnaivebayes}
\begin{table}[H]
\centering
\resizebox{\textwidth}{!}{%
\begin{tabular}{lcc|cc|cc}
\toprule
\textbf{$\epsilon_{CP}$} 
& \multicolumn{2}{c|}{\textbf{Coverage}} 
& \multicolumn{2}{c|}{\textbf{Efficiency}} 
& \multicolumn{2}{c}{\textbf{Informativeness}} \\
\cmidrule(lr){2-3} \cmidrule(lr){4-5} \cmidrule(lr){6-7}
& \textbf{ExponQ} & \textbf{P-COQS} 
& \textbf{ExponQ} & \textbf{P-COQS} 
& \textbf{ExponQ} & \textbf{P-COQS} \\
\midrule
0.1  & 0.9999 (0.0004) & 0.9005 (0.0104) & 1.9678 (0.0289) & 1.1787 (0.0223) & 0.0322 (0.0289) & 0.8213 (0.0223) \\
0.5  & 0.9444 (0.0174) & 0.9005 (0.0101) & 1.3465 (0.0972) & 1.1788 (0.0206) & 0.6535 (0.0972) & 0.8212 (0.0206) \\
1    & 0.9227 (0.0116) & 0.9006 (0.0099) & 1.2509 (0.0360) & 1.1788 (0.0201) & 0.7491 (0.0360) & 0.8212 (0.0201) \\
1.5  & 0.9154 (0.0103) & 0.9006 (0.0099) & 1.2254 (0.0267) & 1.1788 (0.0201) & 0.7746 (0.0267) & 0.8212 (0.0201) \\
3    & 0.9081 (0.0097) & 0.9006 (0.0099) & 1.2019 (0.0213) & 1.1788 (0.0200) & 0.7981 (0.0213) & 0.8212 (0.0200) \\
5    & 0.9051 (0.0097) & 0.9006 (0.0099) & 1.1925 (0.0203) & 1.1789 (0.0200) & 0.8075 (0.0203) & 0.8211 (0.0200) \\
10   & 0.9029 (0.0097) & 0.9006 (0.0099) & 1.1857 (0.0198) & 1.1789 (0.0201) & 0.8143 (0.0198) & 0.8211 (0.0201) \\
\bottomrule
\end{tabular}%
}
\caption{Average effect of CP privacy budget $\epsilon_{CP}$ on conformal prediction with non-DP Na\"ive Bayes model. The numbers are metric averages over $1000$ runs (per method and privacy budget) and in parentheses is the corresponding variance of the metrics.}
\label{npPrivconformalNB}
\end{table}

\subsection{Sample size effect}
\label{samplesizeNB}
\begin{table}[H]
\centering
\resizebox{\textwidth}{!}{%
\begin{tabular}{lcc|cc|cc}
\toprule
\textbf{Samples} 
& \multicolumn{2}{c|}{\textbf{Coverage}} 
& \multicolumn{2}{c|}{\textbf{Efficiency}} 
& \multicolumn{2}{c}{\textbf{Informativeness}} \\
\cmidrule(lr){2-3} \cmidrule(lr){4-5} \cmidrule(lr){6-7}
& \textbf{ExponQ} & \textbf{P-COQS} 
& \textbf{ExponQ} & \textbf{P-COQS} 
& \textbf{ExponQ} & \textbf{P-COQS} \\
\midrule
100  & 0.9794 (0.0434) & 0.9040 (0.1004) & 1.7300 (0.2179) & 1.3677 (0.2713) & 0.2700 (0.2179) & 0.6323 (0.2713) \\
200  & 0.9905 (0.0207) & 0.9206 (0.0695) & 1.7901 (0.1638) & 1.3394 (0.2004) & 0.2099 (0.1638) & 0.6606 (0.2004) \\
500  & 0.9967 (0.0078) & 0.9053 (0.0440) & 1.8566 (0.1158) & 1.2222 (0.1071) & 0.1434 (0.1158) & 0.7778 (0.1071) \\
1000 & 0.9984 (0.0042) & 0.9018 (0.0315) & 1.8951 (0.0876) & 1.1954 (0.0669) & 0.1049 (0.0876) & 0.8046 (0.0669) \\
2000 & 0.9836 (0.0129) & 0.9030 (0.0212) & 1.6272 (0.1483) & 1.1891 (0.0466) & 0.3728 (0.1483) & 0.8109 (0.0466) \\
6000 & 0.9382 (0.0169) & 0.9008 (0.0123) & 1.3149 (0.0795) & 1.1798 (0.0268) & 0.6851 (0.0795) & 0.8202 (0.0268) \\
10000 & 0.9227 (0.0116) & 0.9006 (0.0099) & 1.2509 (0.0360) & 1.1788 (0.0200) & 0.7491 (0.0360) & 0.8212 (0.0200) \\
\bottomrule
\end{tabular}%
}
\caption{Average effect of sample size on DP CP ($\epsilon_{CP}=1$) with non-DP Na\"ive Bayes model. The numbers are metric averages over $1000$ runs (per method and sample size) and in parentheses is the corresponding variance of the metrics. Average effect of sample size on private CP with Non-DP Random Forest model.}
\label{tab:naive_bayes_samples}
\end{table}

\subsection{Effect of $\alpha$}\label{alpha_NP_NB}
\begin{table}[H]
\centering
\resizebox{\textwidth}{!}{%
\begin{tabular}{lcc|cc|cc}
\toprule
\textbf{$\alpha$} 
& \multicolumn{2}{c|}{\textbf{Coverage}} 
& \multicolumn{2}{c|}{\textbf{Efficiency}} 
& \multicolumn{2}{c}{\textbf{Informativeness}} \\
\cmidrule(lr){2-3} \cmidrule(lr){4-5} \cmidrule(lr){6-7}
& \textbf{ExponQ} & \textbf{P-COQS} 
& \textbf{ExponQ} & \textbf{P-COQS} 
& \textbf{ExponQ} & \textbf{P-COQS} \\
\midrule
0.01  & 0.9999 (0.0004) & 0.9901 (0.0033) & 1.9677 (0.0289) & 1.6703 (0.0349) & 0.0323 (0.0289) & 0.3297 (0.0349) \\
0.05  & 0.9785 (0.0115) & 0.9500 (0.0073) & 1.5670 (0.1306) & 1.3610 (0.0244) & 0.4330 (0.1306) & 0.6390 (0.0244) \\
0.10  & 0.9227 (0.0116) & 0.9006 (0.0099) & 1.2509 (0.0360) & 1.1788 (0.0200) & 0.7491 (0.0360) & 0.8212 (0.0200) \\
\bottomrule
\end{tabular}%
}
\caption{Average effect of $\alpha$ on DP CP ($\epsilon_{CP}=1$) with non-DP Na\"ive Bayes model ($\epsilon_{f}=2$). The numbers are metric averages over $1000$ runs (per method and $\alpha$ value) and in parentheses is the corresponding variance of the metrics.}
\label{tab:naive_bayes}
\end{table}

\subsection{Effect of Model Privacy Budget ($\epsilon_{f}$)}

\begin{table}[H]
\centering
\resizebox{\textwidth}{!}{%
\begin{tabular}{lcc|cc|cc|c}
\toprule
\textbf{$\epsilon_{f}$} 
& \multicolumn{2}{c|}{\textbf{Coverage}} 
& \multicolumn{2}{c|}{\textbf{Efficiency}} 
& \multicolumn{2}{c|}{\textbf{Informativeness}} 
& \textbf{Model Accuracy} \\
\cmidrule(lr){2-3} \cmidrule(lr){4-5} \cmidrule(lr){6-7}
& \textbf{ExponQ} & \textbf{P-COQS} 
& \textbf{ExponQ} & \textbf{P-COQS} 
& \textbf{ExponQ} & \textbf{P-COQS} 
& \\
\midrule
0.1  & 0.9218 (0.0115) & 0.8994 (0.0101) & 1.6580 (0.0335) & 1.5905 (0.0233) & 0.3420 (0.0335) & 0.4095 (0.0233) & 0.6319 (0.0139) \\
0.5  & 0.9222 (0.0110) & 0.9000 (0.0094) & 1.5361 (0.0389) & 1.4532 (0.0263) & 0.4639 (0.0389) & 0.5468 (0.0263) & 0.7233 (0.0110) \\
1    & 0.9405 (0.0366) & 0.9812 (0.0409) & 1.7647 (0.1443) & 1.9207 (0.1718) & 0.2353 (0.1443) & 0.0793 (0.1718) & 0.6559 (0.0303) \\
2    & 0.9217 (0.0116) & 0.8999 (0.0098) & 1.4841 (0.0457) & 1.3985 (0.0322) & 0.5159 (0.0457) & 0.6015 (0.0322) & 0.7479 (0.0125) \\
5    & 0.9224 (0.0117) & 0.9000 (0.0098) & 1.2784 (0.0372) & 1.2036 (0.0216) & 0.7216 (0.0372) & 0.7964 (0.0216) & 0.8156 (0.0095) \\
10   & 0.9225 (0.0116) & 0.9002 (0.0098) & 1.2570 (0.0361) & 1.1840 (0.0204) & 0.7430 (0.0361) & 0.8160 (0.0204) & 0.8230 (0.0093) \\
\bottomrule
\end{tabular}%
}
\caption{Average effect of model privacy budget $\epsilon_{f}$ on conformal prediction with DP Na\"ive Bayes model. The numbers are metric averages over $1000$ runs (per method and privacy budget) and in parentheses is the corresponding variance of the metrics.}
\label{modelprivacyNB}
\end{table}










\end{document}